\newcommand{\suppress}[1]{}
\newtheorem{theorem}{Theorem}[section]
\newtheorem{cor}{Corollary}[section]
\newtheorem{claim}{Claim}[section]
\newtheorem{definition}{Definition}[section]
\newcommand{\hM}{{\hat{M}}}
\newcommand{\dmax}{d_{\rm max}}
\newcommand{\davg}{d_{\rm avg}}
\newcommand{\Cmax}{{\cal C}_{\rm max}}
\newcommand{\Cavg}{{\cal C}_{\rm avg}}
\def\e{\varepsilon}
\def\cX{\mbox{$\cal{X}$}}
\def\cY{\mbox{$\cal{Y}$}}
\def\cC{\mbox{$\cal{C}$}}
\def\cM{\mbox{$\cal{M}$}}
\def\cN{\mbox{$\cal{N}$}}
\def\e{\varepsilon}
\def\bx{{\bf x}}
\def\by{{\bf y}}
\newcommand{\bX}{{{\bf X}}}
\def\01{\{0,1\}}
\newcommand{\remove}[1]{}
\begin{document}
\IEEEoverridecommandlockouts
\title{On the Capacity Advantage of a Single Bit}

\author{Michael Langberg\ \ \ \ \ \ \ \ \  \ \ \ \ \ \ \ \ \ 
Michelle Effros
\thanks{M. Langberg is with the Department of Electrical Engineering at The State University of New York at Buffalo.  
Email : \texttt{mikel@buffalo.edu}}
\thanks{M. Effros is with the Department of Electrical Engineering at the California Institute of Technology.
Email : \texttt{effros@caltech.edu}}
\thanks{This material is based upon work supported by NSF grants CCF-1018741 and CCF-1016671.}
}

\maketitle

\begin{abstract}
In this work we study the capacity advantage 
achieved by adding a {\em single bit} of communication -- 
not a link of capacity 1 but a single bit over all time --
to a memoryless network.
Specifically, we present a memoryless network 
in which adding a single bit of communication 
strictly increases the capacity region.
\end{abstract}

\section{Introduction}
The {\em edge removal problem}, defined and studied in \cite{HEJ:10,JEH:11}, 
aims to quantify the loss in capacity 
that results from the removal of a single edge 
(i.e., a point-to-point channel) 
from a given (possible noisy) communication network. 
At first glance, it may seem that the removal of an edge 
with edge capacity $\delta>0$ 
should change the network capacity by an amount 
that tends to zero as $\delta$ tends to zero.  
The following examples 
demonstrate that this is not always the case.

In~\cite{guruswami2003list, langberg2004private}, 
Guruswami and Langberg present simple examples of 
point-to-point channels with memory for which 
the capacity of a network $\cal N$ 
containing the channel 
and a {\em side-information} edge of capacity $\delta$ 
from the transmitter to the receiver 
is described by a function $C_{\cal N}(\delta)$ 
that is not continuous at $\delta=0$. 

While the discontinuity of $C_{\cal N}(\delta)$ 
in~\cite{guruswami2003list, langberg2004private} 
can be attributed to the memory in the point-to-point channel,~\cite{NEL:16} 
shows a similar phenomenon for a memoryless network.
Specifically,~\cite{NEL:16} 
studies communication over a network $\cal N$ 
containing a memoryless Multiple Access Channels (MAC) 
and a $\delta$-capacity {\em cooperation} edge 
carrying information from a ``cooperation facilitator'' to the two transmitters.  
Once again, the capacity $C_{\cal N}(\delta)$ exhibits a discontinuity at $\delta=0$. 
More specifically, \cite{NEL:16} shows that 
for any MAC whose average- and maximal-error capacities differ, 
adding a cooperation facilitator with output capacity $\delta$ 
results in a network $\cal N$ 
whose maximal-error capacity $C_{\cal N}(\delta)$ 
exhibits the described discontinuity.
Dueck's 2-way {\em contraction channel}~\cite{Dueck:78} 
is an example of a MAC with the described property.

In this work, 
we take the edge removal problem to an extreme, 
seeking to understand the effect of removing 
not a $\delta$-capacity edge for arbitrarily small $\delta>0$ 
but rather an edge that can only carry 1 bit of communication (over all time). 
We wish to understand whether there exist networks 
for which the removal of a single bit of communication 
can strictly change the capacity region.  Our study focuses on noisy networks. 
We discuss noiseless networks, i.e., network coding, at the end of this section.

For noisy channels with memory, the answer is immediate. 
It is not difficult to construct a two-state point-to-point channel with memory 
for which a single bit of feedback changes the network capacity.
For example, 
consider a binary symmetric channel 
whose error probability $\theta$ is chosen at random 
and then fixed for all time.  
If $\theta$ equals $\theta_i$, $i\in\{1,2\}$, 
with positive probability $p_i$ and $0\leq\theta_1<\theta_2\leq1/2$
then a single bit of feedback from the receiver to the transmitter 
asymptotically suffices to increase the capacity.

For memoryless channels, the question is far more subtle.
Its solution is the subject of this work. 
In Section~\ref{sec:model}, 
we formalize the notion of a single bit of communication 
by defining the ``1-bit channel" 
and the corresponding notion of capacity to accommodate networks 
with 1-bit channels.
In Section~\ref{sec:main}, 
we demonstrate the existence of a network 
for which the removal of a single 1-bit channel 
changes the capacity region significantly. 
We employ the contraction channel of Dueck \cite{Dueck:78} in our construction.

We note that the edge removal problem has seen significant studies in the noiseless setting of network coding as well.
In contrast to the noisy communication setting, in network coding instances, the question whether there exists networks for which the removal of edges of negligible capacity (in the block length $n$) causes a strictly positive loss in rate is an open problem. For several network coding instances studied in \cite{HEJ:10,JEH:11} it is shown that the removal of an edge of capacity $\delta$ may decrease the rate of communication between each source-receiver pair by at most $\delta$. These instances include networks with collocated sources, networks in which we are restricted to perform linear encoding, networks in which the edges removed are connected to receivers with no out going edges, and additional families of network coding instances. 
However, whether there exist network coding instances with a $\delta$ capacity edge for which the capacity region $C(\delta)$ is not continuous at $\delta=0$ remains an intriguing open problem connected to a spectrum of (at times seemingly unrelated) questions in the context of network communication, e.g., \cite{CG10,chan2014network, LE11, LE:12, WLE13, WLE:15,WLE:16}. One can study the effect of removing 1-bit channels in the noiseless setting of network coding. It is tempting to believe that the removal of a 1-bit channel cannot effect the capacity region of network coding instances, however, the problem is left open in this work.

%

\section{Model}
\label{sec:model}

We wish to consider memoryless networks 
enhanced by the addition of one or more 1-bit channels. 
We begin by defining memoryless and 1-bit channels 
and then define codes and capacities 
appropriate for networks that combine them.

The following definitions are useful 
in the description that follows.
For any positive constant $c$, 
let $[c]=\{1,\ldots,\lfloor c\rfloor\}$.
For any sets $A\subseteq B$ and 
any values $x_i$ or sets $\cX_i$ indexed by $i\in B$, 
let $x_A=(x_i:i\in A)$ and $\cX_A=\prod_{i\in A}\cX_i$.

\subsection{Networks}

An $m$-node {\em memoryless network} 
is described by a triple 
\[
\left(\cX_{[m]},p(y_{[m]}|x_{[m]}),\cY_{[m]}\right),
\]
where $x_i\in\cX_i$ and $y_i\in\cY_i$ 
represent the network's inputs from and outputs to node $i$.
Thus, at each time $t$ node $i$ 
transmits a network input $X_{i,t}$ and 
receives a network output $Y_{i,t}$ 
governed by the statistical relationship $p(y_{[m]}|x_{[m]})$ 
relating all network inputs to all network outputs at time $t$.

A {\em 1-bit channel} is a point-to-point channel 
from some node $i$ to another node $j$ in the network.  
Unlike the memoryless network above,
which can carry the same amount of information 
in every time step, 
the 1-bit channel can carry only one bit of information in total.  
That information is transmitted 
at some fixed time $\tau$ chosen in the code design.  
Given $\tau$, the 1-bit channel effectively acts like 
a collection of independent, single-time-step memoryless channels, 
where the channel at time $t$ is 
\[
\begin{array}{cl}
\left(\{0,1\},1(y_{j,t}=x_{i,t}),\{0,1\}\right) & \mbox{if $t=\tau$} \\
\left(\{0\},1(y_{j,t}=x_{i,t}),\{0\}\right) & \mbox{if $t\neq\tau$}.
\end{array}
\]
Here we use an alphabet $\{0\}$ of size 1 to represent the fact that the 1-bit channel is inactive when $t\neq\tau$.
\subsection{Codes}

Consider a network $\cN$
that combines a memoryless channel with one or more 1-bit channels. 
To capture the 1-bit channels, 
we hence forward allow alphabets $\cX_{i,t}$ and $\cY_{i,t}$ 
to vary with $t$ since node $i$ may transmit information to 
or receive information from one or more 1-bit channels.

A {\em blocklength-$n$ code} $(\tau,X,\hM)$ 
operates the network $\cN$ over $n$ time steps 
with the goal of transmitting a message $M_i\in\cM_i$ 
from each node $i\in[m]$ to all receivers $j$ in demand set $D_i\subseteq[m]\setminus\{i\}$.
By assumption, $\cM_i=[2^{nR_i}]$, 
$R_i$ equals zero if and only if $D_i=\emptyset$, 
and $M_{[m]}$ is distributed uniformly on $\cM_{[m]}$.  
Code $(\tau,X,\hM)$ is defined by 
its activation vector 
\[
\tau=(\tau_e)_{e\in\alpha},
\]
encoding functions 
\[
X = (X_{i,t})_{i\in[m],t\in[n]}, 
\]
and decoding functions 
\[
\hM=(\hat{M}_{j,i})_{j\in[m],i\in D_j}.
\]
Here, $\alpha$ represents the set of 1-bit channels in $\cN$.
For each 1-bit channel $e$, 
constant $\tau_e\in[n]$ 
describes the activation time of $e$.
For each $i$ and $t$, 
encoder 
\[
X_{i,t}:\cY_{i,[t-1]}\times \cM_i\rightarrow \cX_{i,t}
\]
maps node $i$'s previously received network outputs and 
outgoing message 
to the time-$t$, node-$i$ network input $X_{i,t}$. 
For each message $M_j$ and node $i$ that demands $M_j$, 
the decoder 
\[
\hM_{j,i}:\cY_{i,[n]}\times\cM_i\rightarrow\cM_j
\]
maps node $i$'s  received network outputs 
and outgoing message 
to a reconstruction of message $M_j$.  

\subsection{Capacities}

For the purpose of defining capacity, 
it is useful to characterize a code $(\tau,X,\hM)$ 
by its blocklength $n$, rate $R_{[m]}$, 
and error probability.  
The literature contains several definitions of error probability, 
most notably the {\em maximal error probability}
\[
\dmax(\tau,X,\hM) = \max_{j\in[m],i\in D(j)}\Pr(\hM_{j,i}\neq M_j) 
\]
and the {\em average error probability} $\davg(\tau,X,\hM)$:
\[
\frac1{|\cM_{[m]}|}\sum_{j\in [m],w_j\in \cM_j} \Pr(\exists i\in D_j: \hM_{j,i}\neq w_j\mid M_j=w_j).
\]
A blocklength-$n$, rate-$R_{[m]}$ code $(\tau,X,\hM)$ 
is called a max-$(\e,R_{[m]},n)$ solution if 
$\dmax(\tau,X,\hM)\leq \e$ 
and an avg-$(\bar{\e},R_{[m]},n)$ solution if 
$\davg(\tau,X,\hM)\leq \bar{\e}$.

\begin{definition}[Maximal-Error Capacity]
The {maximal-error capacity region} $\cC_{\max}(\cN)$ of 
network $\cN$ is the closure of all rate vectors $R_{[m]}$ 
such that for any $\e>0$, there exists a max-$(\e,R_{[m]},n)$ solution 
for all $n$ sufficiently large.
\end{definition}

\begin{definition}[Average-Error Capacity]
The {average-error capacity region} $\cC_{\rm avg}(\cN)$ 
of network $\cN$ is the closure of all rate vectors $R_{[m]}$ 
such that for any $\bar{\e}>0$, 
there exists an avg-$(\bar{\e},R_{[m]},n)$ solution 
for all $n$ sufficiently large.
\end{definition}

\subsection{Remarks}

Some remarks are in order. 
Notice that for $R^{[m]}$ 
to be included in (the interior of) a capacity region 
$\Cmax(\cN)$ (similarly for $\Cavg(\cN)$), 
we require the existence of max-$(\e,R^{[n]},n)$ codes $(\tau,X,\hM)$ 
for {\em all} sufficiently large blocklengths. 
Without such a requirement, 
the addition of 1-bit channels can have a significant effect 
on the achievable rate for small blocklengths $n$. 
Take for example an empty network $\cN$ 
enhanced with a single 1-bit channel $e$ to obtain $\cN^+$. 
For $n=1$, there is a clear difference 
between the rates achievable on $\cN$ and $\cN^+$. 
We therefore require the blocklength $n$ to grow without bound 
to capture the idea that the channel $e$ can carry only one bit 
{\em over all time} rather than that $e$ can carry one bit in a small time window $[n]$. 
Notice further that our definitions 
require each 1-bit channel $e\in\alpha$
to be active at a constant time $\tau_e$ 
independent of the messages $M_{[m]}$.  
This is important since it prevents codes 
that might use timing to convey information 
about the messages. 

\section{Main result}
\label{sec:main}

The main question we ask in this work is whether there exists a network $\cN$ and a 1-bit channel $e$ such that adding $e$ to $\cN$ yields a new network $\cN^*$ with a strictly larger capacity region. Namely, with
$$
\cC(\cN^{*}) \ne \cC(\cN).
$$
We answer the question for maximal error below.
We leave the question in the context of average error open in this work.\footnote{The results of \cite{NEL:16} that exhibit capacity $C_{\cal N}(\delta)$ with a discontinuity at $\delta=0$ hold for maximal error only. The average error case remains open.}

To prove our result, we start by constructing a network $\cN^+$ 
by combining Dueck's memoryless MAC~\cite{Dueck:78} 
with a pair of memoryless point-to-point channels 
and a collection $\alpha$ of 1-bit channels.
The construction, described below, is depicted in Figure~\ref{fig:plus}.

Dueck's MAC has a pair of transmitters, nodes~1 and~2, 
and a single receiver, node~3.  
The input and output alphabets are 
\[
\begin{array}{rclrcl}
\cX_1 & = & \{a,b,A,B\} & \cY_1 & = & \{0\} \\
\cX_2 & = & \{0,1\}        & \cY_2 & = & \{0\} \\
\cX_3 & = & \{0\}  & \cY_3 & = & \{a,b,c,A,B,C\}\times\{0,1\} 
\end{array}
\]
Since nodes~1 and~2 receive no outputs from the channel 
and node~3 has no input to the channel (here denoted by alphabets of size 1)
we simplify the notation from Section~\ref{sec:model} 
to describe the channel as $\cN_D=(\cX_1\times\cX_2,p(y_3|x_1,x_2),\cY_3)$.  
Where $p(y_3|x_1,x_2)=1(y_3=W(x_1,x_2))$ 
captures the channel's deterministic behavior with output
\[
W(x_1,x_2) = \left\{
\begin{array}{ll}
(c,0) & \mbox{if $(x_1,x_2)\in\{(a,0),(b,0)\}$} \\
(C,1) & \mbox{if $(x_1,x_2)\in\{(A,1),(B,1)\}$} \\
(x_1,x_2) & \mbox{otherwise.}
\end{array}
\right.
\]

We next add to Dueck's MAC $\cN_D$ 
a fourth node, here called the ``cooperation facilitator'' (CF), 
and a pair of memoryless, point-to-point channels.  
We denote the resulting network by $\cN_0$.  
The first memoryless channel, from node~1 to node~4, 
is a noiseless channel of capacity 2.  
The second, from node~2 to node~4, is a noiseless channel of capacity 1.
Given a pair of messages originating at nodes~1 and~2 
and a single receiver at node~3 ($D_1=D_2=\{3\}$ and $D_3=D_4=\emptyset$), 
the capacity region of this modified channel and Dueck's MAC are identical
($\Cmax(\cN_0)=\Cmax(\cN_D)$).

Finally, we build network $\cN^+$ from memoryless network $\cN_0$
by adding $k$ 1-bit channels, $\{f_1,\dots,f_k\}$, from the CF (node~4) to receiver node~3, and by adding a single 1-bit channel, $e$, 
from the CF to node~1.  

\begin{figure}[h!]
\centering
\includegraphics[scale=0.35]{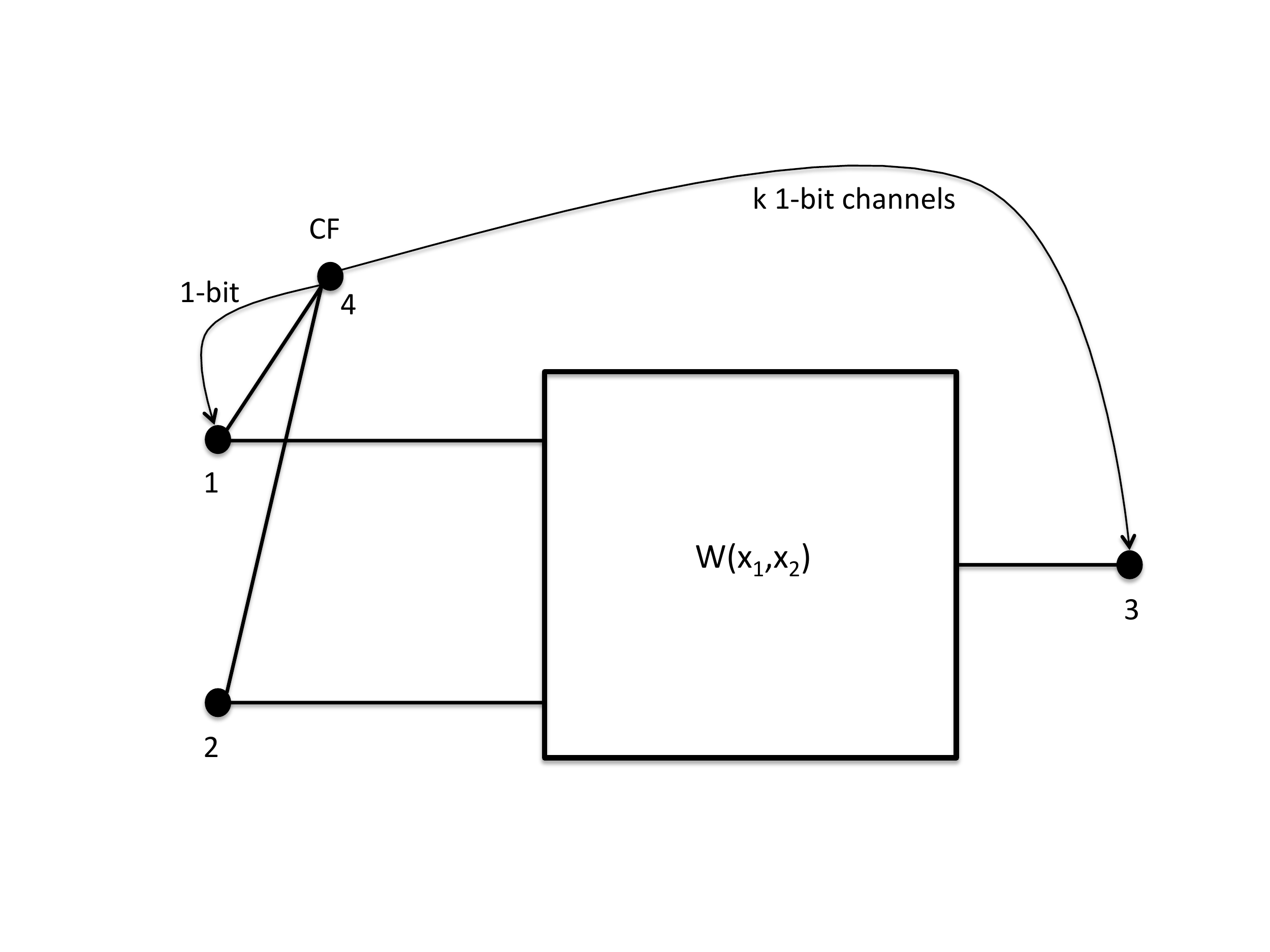}
\caption{Network $\cN^+$. 
Transmitter nodes~1 and~2 wish to send a pair of independent messages 
to the receiver node~3.  The cooperation facilitator, node~4, is labeled by $CF$.  
Node~1 receives a 1-bit input from the $CF$. 
Receiver node~3 receives an additional $k$-bit input from the $CF$.
}
\label{fig:plus}
\end{figure}



Our main result shows that for some finite $k$, 
$$
\Cmax(\cN^+) \ne \Cmax(\cN_0).
$$
That is, removing the 1-bit edges leaving node CF changes the capacity region of the network.
\begin{theorem}
\label{the:main}
For some finite constant $k$, 
$$
(1.24,1) \in \Cmax(\cN^+)
$$
while,
$$
(1.19,0.97) \not\in \Cmax(\cN_0)
$$
\end{theorem}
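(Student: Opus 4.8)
\emph{Proof plan.} I would prove the two displayed facts separately and then note that, together with the downward closedness of maximal-error capacity regions, they give the strict separation $\Cmax(\cN^+)\neq\Cmax(\cN_0)$ claimed above the theorem: $(1.19,0.97)\notin\Cmax(\cN_0)$, while $(1.24,1)\in\Cmax(\cN^+)$ forces $(1.19,0.97)\in\Cmax(\cN^+)$. Recall $\Cmax(\cN_0)=\Cmax(\cN_D)$, so the converse is a statement about Dueck's contraction MAC only. I also expect the achievability construction to produce a whole segment of achievable pairs, so it in fact places $(1.19,0.97)$ in $\Cmax(\cN^+)$ directly, making the appeal to monotonicity optional.

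\emph{Achievability.} Use a two-layer code at node~1, writing each of its channel inputs as a pair (\emph{case},\emph{value}): over $\cN_D$ the case sequence of node~1 is always delivered to node~3, whereas the value at time $t$ is delivered iff that symbol is \emph{not} contracted, i.e.\ iff node~1's case differs from node~2's bit at time $t$. Let the case sequence carry about $n$ bits of $M_1$ (rate $\le 1$), let the value sequence be a codeword of an auxiliary binary code $\cV$ of rate about $0.24$, and have node~2 send $M_2$ directly. The cooperation facilitator overhears both transmitted sequences; with its single bit $e$ it tells node~1 whether to apply a global complementation of its case sequence, choosing the option for which node~1's (post-complementation) case sequence and node~2's sequence disagree in at least $n/2$ positions --- possible for \emph{every} $(M_1,M_2)$ since complementation sends a Hamming distance $d$ to $n-d$. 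Hence at least $n/2$ of node~1's value symbols survive, for every message pair. Node~3 reads off $M_2$ and the case part of $M_1$ directly (using one $f_i$ bit to undo the complementation), erasure-list-decodes $\cV$ from the $\ge n/2$ surviving positions, and spends the remaining $f_i$ bits to select the correct codeword from the resulting list --- with zero error, in fact.

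\emph{The two pillars.} This needs (i) a rate-$\approx 0.24$ binary code that is list-decodable from any $n/2$ erasures with a \emph{constant} list size $L$ --- a routine first/second-moment count over random (linear) codes yields such a code with $L$ depending only on the gap between $0.24$ and $1/2$; this is exactly where the finiteness of $k$ comes from, with $k=1+\lceil\log L\rceil$, and it is what makes the statement qualitatively stronger than the sublinear-capacity facilitator of \cite{NEL:16} --- and (ii) a scheduling argument guaranteeing the facilitator can determine and deliver $e$ before node~1 commits to the relevant part of its transmission, which I would arrange by an appropriate block layout (e.g.\ a short preliminary exchange, or having node~1's case sequence fixed over a prefix and complemented only over a suffix that the facilitator sees in time). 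I expect (ii) --- reconciling ``only one bit to node~1, on time'' with ``only a constant number of bits to node~3'', i.e.\ living within the asymmetric bounded feedback budget --- to be the main obstacle; the rest is bookkeeping of constants.

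\emph{Converse.} For $(1.19,0.97)\notin\Cmax(\cN_D)$ I would either invoke the known description of the maximal-error capacity region of Dueck's contraction MAC \cite{Dueck:78} and evaluate its boundary at $R_2=0.97$, or argue directly: once $R_1>1$, pigeonhole forces many transmitter-$1$ codewords to share each case sequence, and \emph{maximal}-error reliability then forces every shared case sequence (indeed every value-difference support of a confusable pair of $M_1$'s) to be far, in Hamming distance, from the \emph{entire} codebook of transmitter~2, for otherwise some $M_2$ contracts away enough value symbols to merge two $M_1$'s; a packing/volume count on these separated sets caps $R_1$ strictly below $1.19$ whenever $R_2\ge 0.97$. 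The specific numbers $1.24$ vs.\ $1.19$ and $1$ vs.\ $0.97$ are chosen precisely so that the achievability and converse bounds leave a genuine gap, and verifying that gap is the only real computation on this side.
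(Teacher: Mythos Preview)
Your converse matches the paper exactly: it simply cites Dueck's bound $(H(1/3)+2/3-p,H(p))\notin\Cmax(\cN_D)$ and evaluates at $p=0.4$.

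On achievability, your case/value decomposition and the global case-complementation trick are precisely the mechanism the paper uses (it calls the operation ``toggling''): the paper, too, flips upper/lower case so that at most $n/2$ positions are contracted, and then list-decodes with a constant list resolved by the $f_i$ bits. Your pillar~(i) plays the role of the paper's Claim~3.2; the paper proves it by a direct random-coding/union-bound over $\{a,b,A,B\}^n$ rather than by invoking an off-the-shelf erasure-list-decodable code, but the content is the same.

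The genuine gap is your pillar~(ii), and neither of your proposed fixes is quite right. A ``short preliminary exchange'' cannot work: before computing $e$, the CF must learn node~2's entire $n$-bit string, and the link from node~2 to the CF has capacity~$1$, so forwarding costs $\Theta(n)$ steps regardless. Your prefix/suffix alternative \emph{can} work, but it breaks your stated $\ge n/2$ survivor guarantee: if only a suffix of length $n-n_0$ is complemented, you guarantee only $(n-n_0)/2$ uncontracted positions, and the forwarding constraint forces $n_0\ge n/2$, so you get $\ge n/4$, not $\ge n/2$. Your target $0.24<1/4$ happens to survive this, so the scheme is salvageable, but you should say so explicitly rather than rely on the $n/2$ figure.

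The paper's resolution of (ii) is different and is really the missing idea: it runs two phases of length $n$. Phase~1 uses the trivial codebook $\{a,A\}^n\times\{0,1\}^n$, which needs no feedback and already achieves rate $(1,1)$ with zero error; \emph{simultaneously}, nodes~1 and~2 forward their Phase-2 codewords to the CF over the side links. All $1$-bit channels fire at time $n+1$. Phase~2 then runs the toggle-and-list-decode scheme at rate $(3/2-\delta,1)$, now with a genuine global complementation and hence $\ge n/2$ survivors. Averaging gives $(1.25-\delta/2,1)$, and $\delta=0.02$ yields $(1.24,1)$. The insight you were circling is that the preliminary exchange need not be short provided it is not wasted.
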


As a corollary of Theorem~\ref{the:main} we have:
\begin{cor}
\label{cor:main}
There exists a network $\cN$ and a single 1-bit channel $e$ such that the network $\cN^*$ obtained by adding $e$ to $\cN$ satisfies:
$$
\Cmax(\cN^{*}) \ne \Cmax(\cN).
$$
\end{cor}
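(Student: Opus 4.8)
Corollary~\ref{cor:main} follows from Theorem~\ref{the:main} by a telescoping argument. Realize $\cN^+$ by adjoining the $k+1$ one-bit channels $f_1,\dots,f_k,e$ to $\cN_0$ one at a time, producing a chain of networks $\cN_0=\cN^{(0)}\subseteq\cN^{(1)}\subseteq\cdots\subseteq\cN^{(k+1)}=\cN^+$ in which $\cN^{(i+1)}$ is $\cN^{(i)}$ with exactly one extra one-bit channel. Since $\Cmax(\cN^{(0)})\neq\Cmax(\cN^{(k+1)})$ by Theorem~\ref{the:main}, the regions cannot agree all along the chain, so $\Cmax(\cN^{(i)})\neq\Cmax(\cN^{(i+1)})$ for some $i$; taking $\cN=\cN^{(i)}$ and $e$ to be the single one-bit channel present in $\cN^{(i+1)}$ but not in $\cN^{(i)}$ gives the corollary. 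It therefore suffices to prove Theorem~\ref{the:main}, which splits into a converse for $\cN_0$ and an achievability claim for $\cN^+$.

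For the converse, observe that in $\cN_0$ the cooperation facilitator has an empty demand set and no out-going edges, so discarding it turns any code for $\cN_0$ into a code for Dueck's MAC $\cN_D$ with identical rates and error, and conversely; hence $\Cmax(\cN_0)=\Cmax(\cN_D)$. It then remains to verify $(1.19,0.97)\notin\Cmax(\cN_D)$, which I would take from the analysis of the contraction channel in~\cite{Dueck:78} (or re-derive via the hiding argument below together with Fano's inequality). The mechanism is that under maximal error transmitter~$2$ may choose its message so as to contract transmitter~$1$'s entire codeword, after which the output reveals only the $\le n$-bit ``type pattern'' of that codeword (lowercase versus uppercase symbol at each coordinate); this pins $R_1$ close to $1$ whenever $R_2$ is close to its maximum, and in particular excludes $(1.19,0.97)$.

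For achievability, first note $(1.24,1)\in\Cavg(\cN_D)$: with both inputs uniform one computes $I(X_1;Y_3\mid X_2)=1.5$, $I(X_2;Y_3\mid X_1)=1$, and $I(X_1X_2;Y_3)=2.5$, so the average-error MAC region contains the pentagon $\{R_1\le1.5,\ R_2\le1,\ R_1+R_2\le2.5\}$, hence $(1.24,1)$. The plan is to upgrade this to a maximal-error code for $\cN^+$. Transmitter~$2$ sends $M_2\in[2^{(1-\epsilon)n}]$ by an injective encoding into $\{0,1\}^n$, so node~$3$ always recovers $M_2$ (giving $R_2=1-\epsilon$) and the $CF$ finishes receiving $M_2$ with time to spare. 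Transmitter~$1$ prepares two codebook variants that agree except on a block $T$ of $\Theta(n)$ coordinates placed in the \emph{late} part of the transmission, where they carry complementary type patterns. The $CF$ learns $M_1$ over its capacity-$2$ link and $M_2$ over its capacity-$1$ link, and, the MAC being deterministic, the entire output $Y_3^n$; after receiving only the $O(|T|)$ bits it needs, it sends on $e$ — before the activation time $\tau_e$, so transmitter~$1$ can still react on $T$ — the index of the variant whose $T$-part $M_2$ contracts on at most $|T|/2$ coordinates, which exists by complementarity. Node~$3$ then learns transmitter~$1$'s type pattern on all $n$ coordinates plus the exact symbol on at least $|T|/2$ coordinates of $T$, i.e.\ at least $n+|T|/2$ un-hidden bits of $M_1$; choosing $|T|$ between $0.48n$ and $n/2$ makes this exceed $1.24n$, so a random rate-$1.24$ codebook leaves $M_1$ ambiguous only within a constant-size list (the list stays $O(1)$ because $R_1=1.24$ is strictly below the value $\approx 1+|T|/(2n)$ the scheme attains), and the $CF$ resolves the list with $O(1)$ further bits on $f_1,\dots,f_k$. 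This yields $(1.24,1-\epsilon)\in\Cmax(\cN^+)$ for every small $\epsilon>0$ with a finite $k=k(\epsilon)$, whence $(1.24,1)\in\Cmax(\cN^+)$ by closedness of the capacity region.

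The main obstacle is the interaction of the \emph{single} bit $e$ with \emph{timing}. The bit $e$ can influence transmitter~$1$'s input only after $\tau_e$, while the $CF$ cannot commit that bit until it has received enough of $M_2$ over the slow capacity-$1$ link; hence the block $T$ on which transmitter~$1$ is steerable is forced into the tail of the codeword, with $|T|$ bounded above by the time remaining after $\tau_e$ and bounded below, through $\tau_e$, by the time needed to convey $M_2|_T$. Arranging these constraints so that a sum-rate above $2$ is attainable at all is the crux, and it is precisely this trade-off that ties the construction to the specific numbers in Theorem~\ref{the:main}. The subsidiary point — that $O(1)$ rather than $\Theta(\log n)$ bits on the $f_i$'s suffice, so that $k$ is a finite constant — rests on keeping the target strictly interior to the attainable region and on the way Dueck's contraction caps the residual decoding ambiguity.
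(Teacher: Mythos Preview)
Your telescoping argument for the corollary is exactly the paper's proof.

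You then go further and sketch Theorem~\ref{the:main}. The converse (reduce $\cN_0$ to $\cN_D$ and cite Dueck) again matches the paper. For achievability, however, the paper takes a different and cleaner route. Instead of a single block with a steerable tail $T$ and the delicate timing you flag as ``the crux,'' the paper uses a \emph{two-phase} code of total length $2n{+}1$. In phase~1 (times $1,\dots,n$) node~1 uses codewords in $\{a,A\}^n$ and node~2 in $\{0,1\}^n$; since $(x_1,x_2)\mapsto W(x_1,x_2)$ is injective on $\{a,A\}\times\{0,1\}$, node~3 recovers both phase-1 messages with zero error, so this phase already achieves rate $(1,1)$. Simultaneously, the capacity-$2$ and capacity-$1$ links carry the \emph{phase-2} codewords $\bx_1(M_1^2)$ and $\bx_2(M_2^2)$ to the CF. All $k{+}1$ one-bit channels fire at time $n{+}1$. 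In phase~2 (times $n{+}2,\dots,2n{+}1$) node~1 sends $\bx_1(M_1^2)$ or its full toggle $\bar\bx_1(M_1^2)$ according to the CF's bit $b$, chosen so that at most $n/2$ of the phase-2 coordinates are contracted; a random-coding argument (Claim~\ref{claim:good}) then bounds the decoding list by $\lceil 8/\delta\rceil$, which the $k=\lceil\log(8/\delta)\rceil{+}1$ bits on the $f_i$'s resolve. Averaging the two phases gives rate $(1.25-0.5\delta,1)$, and $\delta=0.02$ yields $(1.24,1)$. The point of the two-phase structure is precisely to eliminate the timing tension you identify: by front-loading a trivially decodable sub-code, the CF gets a full $n$ steps to learn \emph{both} phase-2 codewords before any one-bit channel fires, and the toggle acts on the entire second block rather than a constrained tail.

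Your single-phase scheme is not obviously wrong, but as written it has gaps. First, the closedness step is circular as phrased: you write $k=k(\epsilon)$, but if $k$ varied with $\epsilon$ then $\cN^+$ would vary too, and closedness of the capacity region of a \emph{fixed} network would not let you pass to $\epsilon=0$; you need to observe (as is in fact true) that the list size depends only on the slack $1.25-1.24$, not on $\epsilon$. Second, you assert but do not prove the constant-list bound; this requires a union bound over all good outputs, which is exactly the content of the paper's Claim~\ref{claim:good}. Third, your timing discussion stops at identifying the constraints rather than verifying them: the CF must know $\bx_1(M_1)|_T$ and $M_2|_T$ before $\tau_e$ (forcing $|T|\le\tau_e$ via the capacity-1 link) and must also know all of $M_1,M_2$ before the $f_i$'s fire, while $T$ must lie after $\tau_e$ (forcing $|T|\le n-\tau_e$); this yields $|T|\le n/2$ and hence at most $n+|T|/2\le 1.25n$ revealed bits, which is \emph{just} enough for rate $1.24$ but leaves no slack to spare, so the constant-list claim needs care you have not supplied.
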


\begin{proof} (of Corollary~\ref{cor:main})
Consider adding the $(k+1)$ 1-bit channels of $\cN^{+}$ one by one to the network $\cN_0$. 
Specifically, let $\cN_1$ be the network $\cN_0$ enhanced by one of the 1-bit channels of $\cN^{+}$ (which channel precisely is of no significance to the current proof; it may be chosen arbitrarily).
Similarly, for $i=2,\dots,k+1$,  let $\cN_i$ be the network $\cN_{i-1}$ enhanced by one of the 1-bit channels of $\cN^{+}$ that do not yet appear in  $\cN_{i-1}$. 
It holds that
$$
\Cmax(\cN_0) \subseteq \Cmax(\cN_1) \subseteq \Cmax(\cN_2) \dots \subseteq \Cmax(\cN_{k+1}).
$$
Since $\Cmax(\cN_0) \ne \Cmax(\cN^+)$ by Theorem~\ref{the:main}, there must be an $i$ for which $\Cmax(\cN_i) \ne \Cmax(\cN_{i+1})$. The corollary then follows by setting $\cN$ equal to $\cN_i$  and $e$ to be the 1-bit channel added to $\cN_i$ to obtain $\cN_{i+1}$.
\end{proof}

\subsection{Proof of Theorem~\ref{the:main}}

First, we note that it is proven in \cite{Dueck:78} that for $p \in [0,1/2]$ 
$$
(H(1/3)+2/3-p,H(p)) \not\in \Cmax(\cN_D)=\Cmax(\cN_0).
$$
Setting $p=0.4$, which gives $H(p)=0.97$, we note that $(1.19,0.97) \not\in \Cmax(\cN_0)$.

Let $\delta>0$ be a sufficiently small constant and let $k=\lceil \log{(8/\delta)} \rceil+1$ (so that $\cN^+$ adds $\lceil \log{(8/\delta)} \rceil+2$ 1-bit channels to $\cN_0$).
We next prove that
$$
(1.25-0.5\delta,1) \in \Cmax(\cN^{+})
$$
by demonstrating the existence,
for any sufficiently large $n$, of a zero-error $(2n+1)$-block length encoding scheme for network $\cN^{+}$ of rate asymptotically close to $(1.25-0.5\delta,1)$. 
We employ a coding scheme that describes the message sets $\cM_1$ and $\cM_2$ in two parts, here denoted by
$\cM_1 = \cM_{1}^1 \times \cM_1^2=[2^n]\times [2^{\frac{3n}{2}-\delta n}]$, and $\cM_2 = \cM_2^1 \times\cM_2^2=[2^{n}]\times[2^{n}]$. 
The first part of each message is described in the first $n$ time steps. All 1-bit channels are activated at time $n+1$. The second part of each message is then described at time steps $n+2$ through $2n+1$.

A rough description of our scheme follows.
In the first phase messages  $M_1^1 \in \cM_1^1$ and $M_2^1 \in \cM_2^1$ are communicated through the network at rate $(1,1)$. Transmitter node~1 and receiver node~3 cannot use in their encoding/decoding functions information from their incoming 1-bit channels since the 1-bit channels have not been activated yet.
During this first phase, the codewords corresponding to messages $M_1^2$ and $M_2^2$ are forwarded on the edges $(1,CF)$ and $(2,CF)$ to the cooperation facilitator $CF$. 
At time step $n+1$ the $CF$, knowing $M_1^2$ and $M_2^2$, computes and sends the outputs to the 1-bit channels.
Finally, during time steps $n+2$ to $2n+1$, messages $M_1^2$ and $M_2^2$ are communicated through the network. Our scheme in the second phase is described in detail below. The strategy  strongly relies on the information received at transmitter node~1 and receiver node~3 from the 1-bit channels leaving the $CF$. The communication rate in the second phase is $(3/2 - \delta,1)$. All in all, the total rate over both phases tends to $(1.25-0.5\delta,1)$ (as $n$ tends to infinity).

\subsubsection{Phase 1 (time steps 1 to $n$)} In phase 1, the codebook for $\cM_1^1$ equals the set of all vectors in $\{a,A\}^n$. The codebook for $\cM_2^1$ equals the set of all the vectors in $\{0,1\}^n$. 
It is not hard to verify that for any $x_1 \in \{a,A\}$ and $x_2 \in \{0,1\}$ one can decode (with zero error) the value of $x_1$ and $x_2$ from $W(x_1,x_2)$. This implies a rate of $(1,1)$ in the first $n$ time steps.
In addition to communicating information regarding $M_1^1$ and $M_2^1$, in this first phase, the $n$ time steps are used to forward the codewords for messages $M_1^2$ and $M_2^2$ (to be specified below) from transmitter nodes 1 and 2 to the $CF$. 

\subsubsection{1-bit channel activation phase (time step $n+1$)} The 1-bit channels in $\cN^+$ are all activated at time step $n+1$. 
Up to time $n+1$, the $CF$ receives the codewords for messages $M_1^2$ and $M_2^2$ from transmitter nodes 1 and 2, and may compute the bits to be transmitted on its outgoing 1-bit  channels. The codewords for messages $M_1^2$ and $M_2^2$ and the functions to be computed for each outgoing 1-bit channel are specified in the description of phase 2.

\subsubsection{Phase 2 (time steps $n+2$ to $2n$)}
In phase 2, the codebook for $\cM_2^2$ (as before) contains all vectors in $\{0,1\}^n$.
In this phase, the codebook for transmitter node~1 contains $2|\cM_1^2|$ codewords.
We here multiply the size of $\cM_1^2$ by 2 as transmitter node~1 may use in its encoding process not only the input message $M_1^2$ but also the input bit $b$ received at time step $n+1$ from the $CF$ over the 1-bit channel.
The first $|\cM_1^2|$ codewords for transmitter node~1
are a set of distinct vectors from $\cX_1^n$, here denoted by $\{\bx_1(w)\}$ where $w=1, \dots, |\cM_1^2|$.
The set $\{\bx_1(w)\}_{w}$ is chosen uniformly at random from the collection of all subsets of $\cX_1^n$ of size $|\cM_1^2|$, i.e., from $\{A \subset \cX_1^n \mid |A|=|\cM_1^2|\}$.
The second $|\cM_1^2|$ codewords for transmitter node~1 are the codewords $\bar{\bx}_1(w)$ for $w=1,\dots,|\cM_1^2|$ where $\bar{\bx}_1(w)$ is a {\em toggled} version of $\bx_1(w)$ in which each lowercase entry of $\bx_1(w)$ is converted to an uppercase value (of the same letter) and vica versa. 
Given message $M_1^2$, node~1 transmits $\bx_1(M_1^2)$ if it receives $b=0$ on its 1-bit channel and transmits  
$\bar{\bx}_1(M_1^2)$ if it receives $b=1$ on its 1-bit channel.

As specified previously, during phase 1 of our communication, the codewords corresponding to $M_1^2$ and $M_2^2$ are transmitted to the $CF$. 
Specifically, node~1 transmits $\bx_1(M_1^2)$ to the $CF$ and node~2 transmits $\bx_2(M_2^2)$ to the $CF$.
Before defining the encoding function of the $CF$ and the decoding function of receiver node~3 we set some notation.

Let $\cY_3=\cY_{3,1} \times \cY_{3,2} = \{A,B,C,a,b,c\}\times \{0,1\}$ be the output alphabet at receiver node~3.
Let $\by_{3,1}=(y_{3,1;1},\dots,y_{3,1;n}) \in {\cY^n_{3,1}}$ and $\by_{3,2}=(y_{3,2;1},\dots,y_{3,2;n}) \in \cY^n_{3,2}$
be the $n$ symbols received in phase 2 at node~3.
Specifically, given transmitted codewords $\bx_1$ and $\bx_2$, we have $(\by_{3,1},\by_{3,2})=W^n(\bx_1,\bx_2)$, where $W^n$ is the blocklength-$n$ extension of $W$. We use notation $\by_{3,1}=W^n_1(\bx_1,\bx_2)$ and $\by_{3,2}=W^n_2(\bx_1,\bx_2)$, where $W(x_1,x_2)=(W_1(x_1,x_2),W_2(x_1,x_2))$.

Note that $W_2(\bx_1,\bx_2)=\bx_2$ for any $\bx_1$ and $\bx_2$, thus $W^n(\bx_1,\bx_2)=(\by_{3,1},\bx_2)$.
Let $C(\by_{3,1})=|\{i \mid \by_{3,1;i} \in \{c,C\}\}|$ be the number of symbols in $\by_{3,1}$ that equal $c$ or $C$. 
If $\bX_1(\by_{3,1},\bx_2)$ is the set of elements $\bx_1 \in \cX_1^n$ for which $W^n(\bx_1,\bx_2)=(\by_{3,1},\bx_2)$
then $|\bX_1(\by_{3,1},\bx_2)|=2^{C(\by_{3,1})}$. 
As one can decode $\bx_2$ given $\by_{3,1}$ (this follows directly from the definition of $W$), we henceforth use the notation 
$\bX_1(\by_{3,1})$ instead of 
$\bX_1(\by_{3,1},\bx_2)$ respectively.

We say that $\by_{3,1}$ is {\em good} if $|\bX_1(\by_{3,1})| \leq 2^{n/2}$. 
In our encoding scheme for phase 2, we would like to guarantee that $\by_{3,1}$ is always good.
This is accomplished using the information $b$ from the 1-bit channel sent between the $CF$ and node~1.
Namely, let $\bx_1$ and $\bx_2$ be any pair of codewords transmitted to the $CF$ during phase 1 of our communication. 
Given $\bx_1$ and $\bx_2$, the $CF$ may compute $W^n(\bx_1,\bx_2)=(\by_{3,1},\bx_2)$. If $\by_{3,1}$ is good, the $CF$ sends the bit $b=0$ to transmitter node~1 at time step $n+1$; otherwise the $CF$ sends the bit $b=1$. 
Now node~1, knowing bit $b$, sends codeword $\bx_1$ over the $n$ time steps of phase 2 if $b=0$ and $\bar{\bx}_1$ if $b=1$.

Let $\bx_1(w)$ and $\bar{\bx}_1(w)$ be the codewords corresponding to message $w=M_1^2$ of transmitter node~1 and let $\bx_2(v)$ be the codeword corresponding to message $v=M_2^2$ of transmitter node~2. If the bit sent from the $CF$ to transmitter node~1 in time step $n+1$ is 0, then the received symbols at node~3 are $(\by_{3,1},\bx_2(v))=W^n({\bx}_1(w),\bx_2(v))$, and if 
the bit sent from the $CF$ to transmitter node~1 in time step $n+1$ is 1, then the received symbols at node~3 are $(\by_{3,1},\bx_2(v))=W^n(\bar{\bx}_1(w),\bx_2(v))$.

\begin{claim}
For any $\bx_1(w)$ and $\bx_2(v)$, if $W_1^n(\bx_1(w),\bx_2(v))$ is not good then $W_1^n(\bar{\bx}_1(w),\bx_2(v))$ is good. 
\end{claim}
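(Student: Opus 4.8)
\noindent\emph{Proof proposal.} The plan is to exploit a complementarity property of Dueck's output map $W$ under toggling: in every coordinate, exactly one of an input symbol and its toggle yields a ``contracted'' output symbol (one lying in $\{c,C\}$). This forces the $C(\cdot)$ values of the two candidate received strings $W_1^n(\bx_1,\bx_2)$ and $W_1^n(\bar{\bx}_1,\bx_2)$ to sum to exactly $n$, so at most one of them can exceed $n/2$, and hence at most one can fail to be good.

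First I would analyze $W$ one coordinate at a time. Fix $x_2\in\{0,1\}$ and $x_1\in\cX_1=\{a,b,A,B\}$, and recall from the definition of $W$ that $W_1(x_1,x_2)\in\{c,C\}$ exactly when either $x_1$ is lowercase and $x_2=0$, or $x_1$ is uppercase and $x_2=1$. Toggling flips the case of $x_1$ while leaving $x_2$ unchanged, so a short case check over the four possibilities ($x_1$ lowercase/uppercase and $x_2\in\{0,1\}$) shows that precisely one of $W_1(x_1,x_2)$ and $W_1(\bar{x}_1,x_2)$ lies in $\{c,C\}$.

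Summing this $0/1$ identity over the $n$ coordinates of a codeword pair $\bx_1,\bx_2$ then yields
\[
C\bigl(W_1^n(\bx_1,\bx_2)\bigr)+C\bigl(W_1^n(\bar{\bx}_1,\bx_2)\bigr)=n .
\]
Combining this with the facts already established in the excerpt, namely that $|\bX_1(\by_{3,1})|=2^{C(\by_{3,1})}$ and hence that $\by_{3,1}$ is good iff $C(\by_{3,1})\le n/2$, completes the argument: if $W_1^n(\bx_1(w),\bx_2(v))$ is not good then $C\bigl(W_1^n(\bx_1(w),\bx_2(v))\bigr)>n/2$, so $C\bigl(W_1^n(\bar{\bx}_1(w),\bx_2(v))\bigr)=n-C\bigl(W_1^n(\bx_1(w),\bx_2(v))\bigr)<n/2$, i.e., $W_1^n(\bar{\bx}_1(w),\bx_2(v))$ is good.

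The only step that needs any care is the per-coordinate case analysis; everything else is bookkeeping. It is worth noting that nothing here is probabilistic — the identity holds for arbitrary codeword pairs, not just the randomly chosen $\{\bx_1(w)\}$ — which is precisely why the single toggle bit $b$ sent from the $CF$ to node~1 suffices to guarantee that node~3's received string is always good.
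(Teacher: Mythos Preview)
Your argument is correct and matches the paper's proof essentially step for step: both show by a per-coordinate case analysis that exactly one of $W_1(x_1,x_2)$ and $W_1(\bar{x}_1,x_2)$ lies in $\{c,C\}$, sum over coordinates to get $C(W_1^n(\bx_1,\bx_2))+C(W_1^n(\bar{\bx}_1,\bx_2))=n$, and conclude that at most one side can have $C(\cdot)>n/2$. Your writeup is slightly more explicit about why the per-coordinate identity holds, but there is no substantive difference.
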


\begin{proof}
Follows from the definition of $W^n$.
Consider an entry $x_1$ in $\bx_1(w)$ and its corresponding entry $\bar{x}_1$ in $\bar{\bx}_1(w)$.
Let $x_2 \in \{0,1\}$.
It holds (by a simple exhaustive case analysis) that exactly one of the values $W_1(x_1,x_2)$ and $W_1(\bar{x}_1,x_2)$ is in the set $\{c,C\}$. Thus, the number of entries in $W_1^n(\bx_1(w),\bx_2(v))$ which are in the set $\{c,C\}$ plus the number of entries in $W_1^n(\bar{\bx}_1(w),\bx_2(v))$ which are in the set $\{c,C\}$ is exactly $n$. If the former is more than $n/2$ (implied by the fact that $W_1^n(\bx_1(w),\bx_2(v))$ is not good) then the latter is less than $n/2$ (implying that $W_1^n(\bar{\bx}_1(w),\bx_2(v))$ is good). 
\end{proof}

We now show that with high probability over our code design, for any good received word $(\by_{3,1},\by_{3,2})$, there are at most $\lceil 8/\delta \rceil$ codeword pairs $(\bx_1,\bx_2)$  that satisfy $W^n(\bx_1,\bx_2)=(\by_{3,1},\by_{3,2})$. Here $(\by_{3,1},\by_{3,2})$ is good iff $\by_{3,1}$ is good.
As $\by_{3,2}=\bx_2$, we analyze the number of possible codewords $\bx_1$ that satisfy $W^n(\bx_1,\bx_2)=(\by_{3,1},\bx_2)$.

\begin{claim}
\label{claim:good}
For any sufficiently large $n$, with high probability over our code design, for any good $\by_{3,1}$, there are at most $\lceil 8/\delta \rceil$ codewords $\bx_1 \in \{\bx_1(w)\}_w \cup \{\bar{\bx}_1(w)\}_w$ such that $W^n(\bx_1,\bx_2)=(\by_{3,1},\bx_2)$. 
\end{claim}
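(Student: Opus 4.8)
The plan is to prove the claim by a union bound over all good received words $\by_{3,1}$, exploiting the randomness in the phase-2 codebook $R:=\{\bx_1(w)\}_w$. Fix a good $\by_{3,1}$; as noted above it determines a set $S:=\bX_1(\by_{3,1})\subseteq\cX_1^n$ (which does not depend on the accompanying $\bx_2$) with $|S|=2^{C(\by_{3,1})}\le 2^{n/2}$, and the quantity to bound is $|S\cap T|$, where $T:=\{\bx_1(w)\}_w\cup\{\bar{\bx}_1(w)\}_w=R\cup\bar R$ is the full phase-2 codebook of node~1 and $\bar R$ denotes the image of $R$ under the (fixed-point-free) toggle involution. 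Since $T=R\cup\bar R$ and toggling is a bijection carrying $S\cap\bar R$ onto $\bar S\cap R$, we have $|S\cap T|\le|S\cap R|+|S\cap\bar R|=|S\cap R|+|\bar S\cap R|$, with $|S|,|\bar S|\le 2^{n/2}$. Hence it suffices to control $|U\cap R|$ for a fixed set $U$ of size at most $2^{n/2}$.

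The second step is the tail estimate. Recall $R$ is a uniformly random $N$-subset of $\cX_1^n$, where $N=|\cM_1^2|\le 2^{3n/2-\delta n}$ and $|\cX_1^n|=4^n$. For any fixed $a$-element set $A\subseteq\cX_1^n$ one has $\Pr[A\subseteq R]=\prod_{j=0}^{a-1}\tfrac{N-j}{4^n-j}\le (N/4^n)^a$, so a union bound over $a$-subsets of $U$ gives
\[
\Pr\big[\,|U\cap R|\ge a\,\big]\ \le\ \binom{|U|}{a}\Big(\frac{N}{4^n}\Big)^{a}\ \le\ \frac{|U|^{a}}{a!}\,\Big(\frac{N}{4^n}\Big)^{a}\ \le\ \frac{2^{an/2}}{a!}\,2^{-a(n/2+\delta n)}\ =\ \frac{2^{-a\delta n}}{a!}.
\]
Set $t:=\lceil 8/\delta\rceil$ and $a:=\lceil (t+1)/2\rceil$. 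Since $|S\cap T|\ge t+1$ forces $|S\cap R|\ge a$ or $|\bar S\cap R|\ge a$, applying the display to $U=S$ and to $U=\bar S$ yields $\Pr[\,|S\cap T|>t\,]\le 2\cdot 2^{-a\delta n}/a!$.

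The third step is the union bound over good words. There are at most $|\cY_{3,1}|^n=6^n$ candidate words $\by_{3,1}$, so the probability that $\bX_1(\by_{3,1})\cap T$ has more than $t$ elements for some good $\by_{3,1}$ is at most $6^n\cdot 2\cdot 2^{-a\delta n}/a!=(2/a!)\,2^{\,n(\log_2 6-a\delta)}$. Because $a\ge (t+1)/2\ge (8/\delta+1)/2>4/\delta$, we have $a\delta>4>\log_2 6$, so this bound tends to $0$ (exponentially) as $n\to\infty$; this is exactly the statement of the claim.

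The step I expect to be the main obstacle is lining up the two exponential rates so that the per-word failure probability beats the number $6^n$ of candidate words. All the slack comes from \emph{goodness}: it forces $|S|\le 2^{n/2}$, while the density of the random codebook is only $N/4^n=2^{-n/2-\delta n}$, so $|S|\,N/4^n\le 2^{-\delta n}$ is already exponentially small, and raising this to the power $a=\Theta(1/\delta)$ produces a factor $2^{-\Theta(\delta n)\,a}$ that dominates $6^n$ — which is precisely why the threshold $\lceil 8/\delta\rceil$ (equivalently, $2^{k}=\Theta(1/\delta)$ 1-bit channels leaving the $CF$, so that the $CF$ can afterwards name the true codeword to node~3) is what the construction needs. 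A secondary, routine point is that the random object is $R$ whereas the codebook is $T=R\cup\bar R$; this is handled by $|S\cap T|\le|S\cap R|+|\bar S\cap R|$ together with the toggle bijection, and the without-replacement sampling only helps, since $\Pr[A\subseteq R]\le (N/4^n)^{|A|}$.
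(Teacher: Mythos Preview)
Your proof is correct and follows essentially the same route as the paper: a tail bound on the number of random codewords landing in the preimage set $\bX_1(\by_{3,1})$, followed by a union bound over all good received words. The only cosmetic difference is bookkeeping: the paper counts \emph{messages} $w$ for which $\bx_1(w)$ or $\bar{\bx}_1(w)$ lands in $\bX_1(\by_{3,1})$ (so the threshold $\ell=\lceil 8/\delta\rceil+1$ is applied directly, and the per-message hit probability is bounded by $4\cdot 2^{-3n/2}$ via the sequential without-replacement description), whereas you count \emph{codewords} by splitting $T=R\cup\bar R$, using the toggle bijection to reduce to $|S\cap R|+|\bar S\cap R|$, and halving the threshold to $a=\lceil(t+1)/2\rceil$. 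Your use of $\Pr[A\subseteq R]\le (N/4^n)^{|A|}$ and the tighter $6^n$ count is arguably cleaner than the paper's sequential argument and its crude $2^{6n}$ bound, but the underlying idea and the arithmetic that makes it work ($|S|\cdot N/4^n\le 2^{-\delta n}$) are identical.
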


\begin{proof}
Let $\by_{3,1}$ be good.
Recall that the codebook $\{\bx_1(w)\}_w$ is chosen uniformly from subsets of size $|\cM_1^2|$ in $\cX_1^n$.
Consider choosing the codebook $\{\bx_1(w)\}_w$ in an iterative manner, where in iteration $w$, $\bx_1(w)$ is chosen uniformly from $\cX_1^n \setminus \{\bx_1(w')\}_{w' <w}$, i.e.,  the set $\{\bx_1(w)\}_w$ is chosen uniformly without repetitions. 
For any $w$, the probability (over the choice of $\bx_1(w)$) that $W_1^n(\bx_1(w),\bx_2)=\by_{3,1}$ is exactly the probability that $\bx_1(w) \in \bX_1(\by_{3,1})$ which is at most $\frac{2^{n/2}}{4^n-2^{3n/2-\delta n}} \leq 2\cdot 2^{-3n/2}$ for sufficiently large $n$.
For any $w$, the probability that $W_1^n(\bar{\bx}_1(w),\bx_2)=\by_{3,1}$ is exactly the probability that $\bar{\bx}_1(w) \in \bX_1(\by_{3,1})$ which is again at most $2 \cdot 2^{-3n/2}$.
Thus the probability (over the choice of $\bx_1(w)$) that $W_1^n(\bx_1(w),\bx_2)=\by_{3,1}$ or $W_1^n(\bar{\bx}_1(w),\bx_2)=\by_{3,1}$ is at most $4 \cdot 2^{-3n/2}$.
Moreover, the probability that there exist $\ell$ messages $\{w_1,\dots, w_\ell\}$ in $\cM_1^2$ such that for all $i=1,\dots,\ell$ it holds that either $W_1^n(\bx_1(w_i),\bx_2)=\by_{3,1}$ or $W_X^n(\bar{\bx}_1(w_i),\bx_2)=\by_{3,1}$ is at most 
\begin{eqnarray*}
 {{2^{3n/2 - \delta n}} \choose {\ell}} 4^\ell 2^{-3\ell n/2} & \leq &  2^{3n\ell/2 - \delta n\ell} 4^\ell 2^{-3\ell n/2} \\
& = & 4^\ell 2^{- \delta n\ell}
\end{eqnarray*}
Setting $\ell=\lceil 8/\delta \rceil+1$, we have that the above probability is at most $2^{-7 n}$ for sufficiently large $n$.
Union bounding over all $2^{6n}$ possible good $\by_{3,1}$, we conclude the assertion with probability $1-2^{-n}$ over code design.
\end{proof}

Assume a codebook $\{\bx_1(w)\}_w$ that satisfied the conditions of Claim~\ref{claim:good}. We now complete the description of our encoding and decoding scheme of phase 2. 
Let $\bx_1(w)$, $\bx_2$ be the codewords forwarded to the $CF$ during the first phase of communication and let $b$ be the one bit value sent back to transmitter node~1 (from the $CF$) at time step $n+1$.
If $b=0$, let $\by_{3,1}=W_1^n(\bx_1(w),\bx_2)$ otherwise let $\by_{3,1}=W_1^n(\bar{\bx}_1(w),\bx_2)$.
The vector $\by_{3,1}$ is received at receiver node~3 during phase 2.
We are guaranteed that $\by_{3,1}$ is good. 
By Claim~\ref{claim:good} we know that at most $\lceil 8/\delta \rceil$ other codewords $\bx_1$ may also satisfy $\by_{3,1}=W_1^n(\bx_1,\bx_2)$. Thus receiver node~3 can decode to a list $L({\by_{3,1}})$ of codewords (i.e., messages) corresponding to $\by_{3,1}$ of size at most $\lceil 8/\delta \rceil$ that includes the codeword $\bx_1(w)$ or $\bar{\bx}_1(w)$ sent by transmitter node~1.
 
To decode the message $M_1^2$ correctly,  receiver node~3 needs to know whether $\bx_1(M_1^2)$ or $\bar{\bx}_1(M_1^2)$ was sent by node~1 and in addition the index of $\bx_1(M_1^2)$ or $\bar{\bx}_1(M_1^2)$ in the list $L(\by_{3,1})$. Here we use the fact that the codebook $\{\bx_1(w)\}_w$ consists of distinct codewords.
This information can be sent from the $CF$ to receiver node~3 on the $k$, 1-bit channels of $\cN^{+}$.
Here we have that $k=\lceil \log{(8/\delta)} \rceil+1$.
Specifically, at time step $n+1$, the $CF$ can simulate the behavior of the channel $W^n$ (as $W$ is deterministic) and thus deduce $\by_{3,1}$ and the list $L({\by_{3,1}})$ available to receiver node~3. Using a lexicographic ordering on all vectors of $\cY_{3,1}^n$, at time step $n+1$, the $CF$ can send the bit $b$ (the same bit it sends to transmitter node~1) and also the location of  $\bx_1(M_1^2)$ or $\bar{\bx}_1(M_1^2)$ in $L(\by_{3,1})$ to receiver node~3. This information allows node~3 to determine $M_1^2$ (without error).
This concludes the proof for phase 2 of our communication.

\subsection{Putting it all together}
All in all, in $2n+1$ time steps, the message $M_1=(M_1^1,M_1^2) \in [2^n]\times [2^{\frac{3n}{2}-\delta n}]$, and the message $M_2 = (M_2^1,M_2^2) \in [2^{n}]\times[2^{n}]$ are decoded with zero error at receiver node~3. This implies a communication rate of 
$$
\left(\frac{2n(1.25-0.5\delta)}{2n+1},\frac{2n}{2n+1}\right)
$$ 
which in turn implies for $\delta=0.02$ that
$$
\left(1.24,1\right) \in \Cmax(\cN^+)
$$
as asserted in the theorem.

\section{Conclusions}
In this work we study the effect of 1-bit of communication in memoryless networks.
We prove the existence of a network $\cN$ and a 1-bit channel $e$ such that adding $e$ to $\cN$ yields a new network $\cN^*$ with a strictly larger capacity region. 
Our results hold for the maximal-error criteria.
Whether there exist a memoryless network $\cN$ and a $\delta$ capacity noiseless channel $e$ (let alone a 1-bit channel $e$) such that adding $e$ to $\cN$ results in a capacity region $\cC_{avg}(\cN(\delta))$ which is not continuous at $\delta=0$ is an intriguing open problem. 
The effect of adding a 1-bit channel in the noiseless setting of network coding is not addressed in this work and remains a fascinating open problem.

\bibliographystyle{unsrt}
\bibliography{proposal}

\begin{thebibliography}{10}

\bibitem{HEJ:10}
T.~Ho, M.~Effros, and S.~Jalali.
\newblock On equivalences between network topologies.
\newblock In {\em Forty-Eighth Annual Allerton Conference on Communication,
  Control, and Computing}, 2010.

\bibitem{JEH:11}
S.~Jalali, M.~Effros, and T.~Ho.
\newblock On the impact of a single edge on the network coding capacity.
\newblock In {\em Information Theory and Applications Workshop (ITA)}, 2011.

\bibitem{guruswami2003list}
V.~Guruswami.
\newblock List decoding with side information.
\newblock In {\em Proceedings of 18th IEEE Annual Conference on Computational
  Complexity}, pages 300--309, 2003.

\bibitem{langberg2004private}
M.~Langberg.
\newblock Private codes or succinct random codes that are (almost) perfect.
\newblock In {\em Annual IEEE Symposium on Foundations of Computer Science},
  volume~45, pages 325--334, 2004.

\bibitem{NEL:16}
P.~Noorzad, M.~Effros, and M.~Langberg.
\newblock Can negligible cooperation increase network reliability?
\newblock In {\em International Symposium on Information Theory}, 2016.

\bibitem{Dueck:78}
G.~Dueck.
\newblock Maximal error capacity regions are smaller than average error
  capacity regions for multi-user channels.
\newblock {\em Probl. Contr. Inform. Theory}, 7:11--19, 1978.

\bibitem{CG10}
T.~Chan and A.~Grant.
\newblock On capacity regions of non-multicast networks.
\newblock In {\em International Symposium on Information Theory}, pages 2378 --
  2382, 2010.

\bibitem{chan2014network}
T.~Chan and A.~Grant.
\newblock Network coding capacity regions via entropy functions.
\newblock {\em IEEE Transactions on Information Theory}, 60(9):5347--5374,
  2014.

\bibitem{LE11}
M.~Langberg and M.~Effros.
\newblock Network coding: Is zero error always possible?
\newblock In {\em Proceedings of Forty-Ninth Annual Allerton Conference on
  Communication, Control, and Computing}, pages 1478--1485, 2011.

\bibitem{LE:12}
M.~Langberg and M.~Effros.
\newblock {Source coding for dependent sources.}
\newblock {\em In proceedings of IEEE Information Theory Workshop (ITW)}, 2012.

\bibitem{WLE13}
M.~F. Wong, M.~Langberg, and M.~Effros.
\newblock {On a Capacity Equivalence between Network and Index Coding and the
  Edge Removal Problem}.
\newblock In {\em proceedings of International Symposium on Information
  Theory}, pages 972 -- 976, 2013.

\bibitem{WLE:15}
M.~F. Wong, M.~Langberg, and M.~Effros.
\newblock On an equivalence of the reduction of k-unicast to 2-unicast capacity
  and the edge removal property.
\newblock In {\em IEEE International Symposium on Information Theory (ISIT)},
  pages 371--375, 2015.

\bibitem{WLE:16}
M.~F. Wong, M.~Langberg, and M.~Effros.
\newblock On the tightness of an entropic region outer bound for network coding
  and the edge removal property.
\newblock {\em To appear in IEEE International Symposium on Information
  Theory}, 2016.

\end{thebibliography}

\end{document}